\DeclareMathOperator*{\argmin}{arg\,min}
\def\BState{\State\hskip-\ALG@thistlm}
\newtheorem{proposition}{Proposition}
\newtheorem{definition}{Definition}
\newtheorem{remark}{Remark}
\def\BState{\State\hskip-\ALG@thistlm}
\def\BibTeX{{\rm B\kern-.05em{\sc i\kern-.025em b}\kern-.08em
    T\kern-.1667em\lower.7ex\hbox{E}\kern-.125emX}}
\begin{document}

\title{{\bf\Large Strategic Information Attacks on Incentive-Compatible Navigational Recommendations in Intelligent Transportation Systems}
}

\author{Ya-Ting Yang, Haozhe~Lei, and Quanyan Zhu
\thanks{The Authors are with the Department of Electrical and Computer Engineering, New York University, Brooklyn, NY, 11201, USA; E-mail: {\tt\small \{yy4348, hl4155, qz494\}@nyu.edu}. YY and HL have contributed equally. Correspondence should be addressed to YY.}%
}

\maketitle

\begin{abstract}
Intelligent transportation systems (ITS) have gained significant attention from various communities, driven by rapid advancements in informational technology. Within the realm of ITS, navigational recommendation systems (RS) play a pivotal role, as users often face diverse path (route) options in such complex urban environments. However, RS is not immune to vulnerabilities, especially when confronted with potential information-based attacks. This study aims to explore the impacts of these cyber threats on RS, explicitly focusing on local targeted information attacks in which the attacker favors certain groups or businesses. We study human behaviors and propose the coordinated incentive-compatible RS that guides users toward a mixed Nash equilibrium, under which each user has no incentive to deviate from the recommendation. Then, we delve into the vulnerabilities within the recommendation process, focusing on scenarios involving misinformed demands. In such cases, the attacker can fabricate fake users to mislead the RS's recommendations. Using the Stackelberg game approach, the analytical results and the numerical case study reveal that RS is susceptible to informational attacks. This study highlights the need to consider informational attacks for a more resilient and effective navigational recommendation.
\end{abstract}

\begin{IEEEkeywords}
Intelligent transportation systems, path recommendation, information attack, Stackelberg game.
\end{IEEEkeywords}

\section{Introduction}
Harnessing the vast information available from modern wireless communication and Internet of Things (IoT) advancements \cite{lv2020ai,zantalis2019review}, coupled with the progress made in data science and artificial intelligence \cite{veres2019deep,haydari2020deep}, intelligent transportation systems (ITS) have gained substantial attention for their ability to effectively tackle traffic congestion and elevate driver experiences. Within the realm of ITS, path recommendation systems (RS) such as Google Maps and Apple Maps play a vital role in complex urban environments with diverse route choices \cite{van2016user} for users, including drivers and pedestrians. Based on the given information, the RS offers routes to simplify users' decision-making processes by presenting good paths, aiming to reduce travel duration and alleviate congestion \cite{8262884}. However, unlike routing, user compliance with recommendations is not guaranteed, emphasizing the need for incentive-compatible recommendations \cite{ning2023robust}.

In addition, the path recommendation process is prone to various vulnerabilities \cite{mecheva2020cybersecurity} that attackers can leverage to promote particular groups or businesses in a local targeted sense or potentially exacerbating congestion levels on a broader network-wide scale. Within this context, information-based attacks emerge as a critical concern, as they empower malicious entities to spread misinformation and manipulate drivers to achieve their objectives \cite{Waniek2021}. For example, within the Waze platform, police officers can intentionally provide no congestion information on certain roads to catch speeding drivers \cite{waze_police}, while residents may fabricate congestion reports to divert traffic away from their residential areas \cite{waze_resident}. Moreover, \cite{eryonucu2022sybil} illustrates how Sybil-based attacks can effectively manipulate the crowdedness at the points of interest, traffic congestion levels, and public transportation occupancy within Google Maps.

In this work, we discuss the potential vulnerabilities of the coordinated incentive-compatible navigational RS. Our initial step involves formulating a comprehensive RS problem. Beyond the traditional RS approaches that typically optimize individual user costs or the overall social profile \cite{Multihoming,zhang2010optimizing}, our approach takes into account the impact of human behaviors. Our primary goal is to ensure that users do not have incentives to deviate from the recommendation, while considering the non-user drivers. From the established framework, we observe an equivalence between the incentive-compatible RS and the Wardrop equilibrium \cite{WE}, a connection that facilitates the risk analysis and the design of computational algorithms.

To assess the risk of cyber attacks on RS, particularly in the context of local information attacks, we employ Stackelberg game frameworks. The attacker is the leader who conveys misinformation, while the RS, as the follower, responds to the provided information. Although attacks may differ in their methods, their impacts can be classified into two major categories - misinformed demands and misinformed traffic conditions such as travel time costs. This study focuses on the former, in which the attacker can fabricate Sybil-based users with fake origin-destination (OD) pairs to make the RS recommend the true users towards a specific road that favors certain groups or businesses. That is, by carefully designing the misinformed demands, such as how many fake users with which OD pairs, the attacker can make the RS redistribute the true users on other alternative paths towards the target road.


This research demonstrates that the RS considering human behaviors and recommending users toward a mixed Nash equilibrium is susceptible to informational attacks. Our contributions can be summarized as follows.
\begin{itemize}
    \item We formulate a game-theoretic problem aimed at designing an incentive-compatible navigational RS that considers the behaviors of non-users within the traffic network and guides users toward an optimal traffic equilibrium where no users have incentives to deviate from the provided recommendations.
    \item We identify vulnerabilities of the RS, with a focus on the misinformed demands in the local targeted attack. To quantify the risk of these vulnerabilities, we employ a Stackelberg game approach. Both analytical results and numerical case studies show that users are highly vulnerable to attacks targeting roads by fabricating fake demands with OD pairs on alternative paths.
    \item We discover a ``Resilience Paradox'' where the local targeted attack by misinformed demands or traffic conditions can benefit the overall traffic outcome regarding total travel time costs in some situations.

 
\end{itemize}

\section{Related Works}
Research efforts, e.g., \cite{van2016user,siuhi2016opportunities}, on navigational RSs share the same goal of elevating user experience and mitigating the congestion level. One aspect typically focuses on optimizing recommendations for independent routing, which tends to overlook other users and may lead to the flash crowd effect \cite{flash_crowd_effect}. Alternatively, the RS may prioritize system efficiency at the expense of some users' utilities, or may consider a user-oriented equilibrium routing \cite{user_eqm_rl, ning2023robust} that reduces system costs to a certain level compared with independent routing. Our study aims to go one step further. We consider human factors such as user's compliance with recommendations and the behaviors of those non-user drivers, then propose a more comprehensive RS that guides users towards a mixed Nash equilibrium.


Regarding malicious entities or potential attackers \cite{mecheva2020cybersecurity} in the realm of ITS \cite{cybersecurity_transit}, most studies typically have focused on attacks that aim to increase the overall congestion level of the traffic network \cite{pan2022poisoned}. We aim to scrutinize the particular vulnerabilities inherent to navigational RSs, which are susceptible to a wide range of potential attacks, and focus on the local targeted attacks that has few systematic studies. Specifically, this work expands the scope of recent studies \cite{eryonucu2022sybil} and identifies a broadened class of attacks, including attackers who may seek to mislead users onto specific roads. We delve into Sybil-based attack methods, where misinformation regarding non-existent demands comes from fake users.





\section{System Model and Preliminary}\label{sec:model}

The feedback structure of the coordinated incentive-compatible navigational RS is illustrated in Fig. \ref{fig:RS_model}. When seeking path recommendations, users begin by providing their OD pairs to the RS. The RS collects data from various sources, including traffic sensors, cloud databases, and user reports. Then, the RS processes this information to generate path recommendations for the users.

\subsection{Navigational Recommendation Systems} \label{sec:NRS}
Motivated by the congestion game \cite{ning2023robust}, the ingredients for a RS consist of the following. The RS is the app that recommends paths to its ``users'' $u \in \mathcal{U}=\{1, \cdots, m\}$. Each user $u$ has an origin $O_{u}$ and destination $D_{u}$ pair (OD pair), denoting the OD pair of user $u$ as $\theta_u$ for later use. Each user $u$ then has a set of feasible paths $\mathcal{S}_u=\{1, \cdots, k_u\}$ from origin $O_{u}$ and destination $D_{u}$. Such a set can be identified by shortest path algorithms \cite{yen1971finding}. The urban transportation network can be represented as a graph $\mathcal{G}=(\mathcal{V}, \mathcal{E})$, where $\mathcal{V}$ denotes the set of intersections; $\mathcal{E}$ represents roads between intersections. Passing through an edge $e\in \mathcal{E}$ induces a cost $c_e: \mathbb{R}_{\ge 0} \mapsto \mathbb{R}_{+}$ related to the expected flow $f_e \in \mathbb{R}_{\ge 0}$ on that road $e$. One possible choice of the cost function $c_e(\cdot)$ can be the travel time cost $c_e(f_e)=t_e\left(1+\eta\left(\frac{f_e}{k_e}\right)^\zeta\right)$ that given by the standard
Bureau of Public Roads (BPR) function, where $t_e \in \mathbb{R}_{+}$ is the free-flow travel time on edge $e$, $k_e \in \mathbb{R}_{+}$ is the capacity of edge $e$, and $\eta, \zeta \in \mathbb{R}_{\ge 0}$ are some parameters.

\begin{figure}
    \centering
    \includegraphics[width=3.1in]{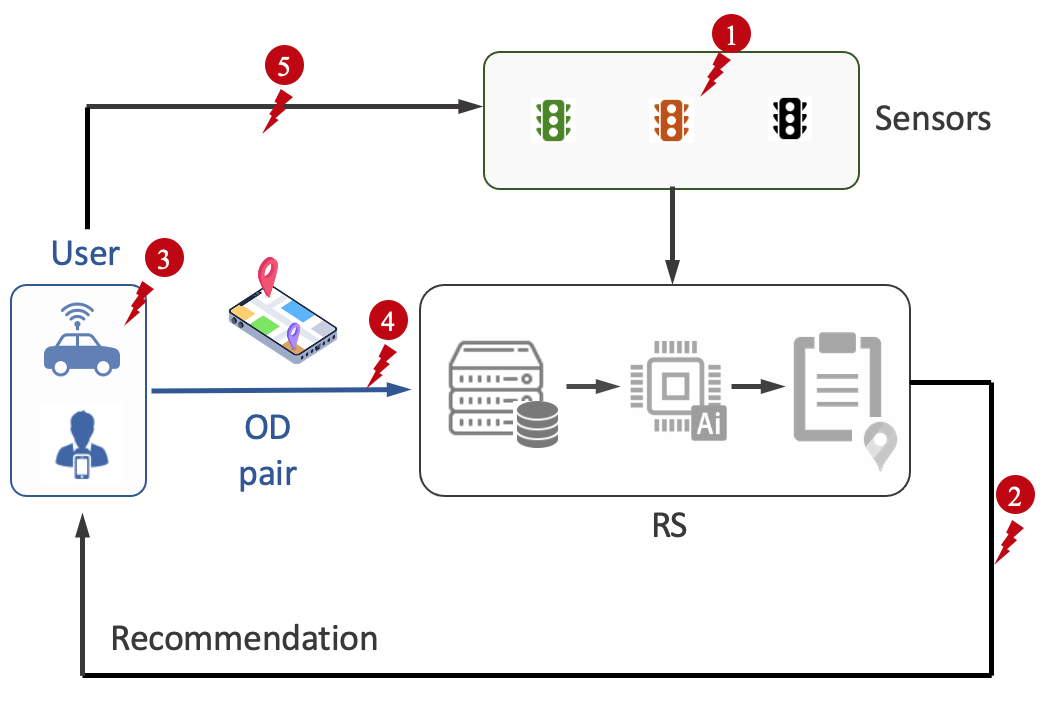}
    \caption{The process for path recommendations. The users report their OD pairs to the RS. The RS gathers data (from sensors, cloud database, other users’ report, etc.) and then provide path recommendations  to the users; (1)-(5) indicate possible vulnerabilities of the RS.}
    \label{fig:RS_model}
\vspace{-5mm}
\end{figure}

In summary, the structural objects that the RS considers is given by $\mathscr{G}=\left\langle \ \mathcal{U}, (\mathcal{S}_u)_{u \in \mathcal{U}}, \mathcal{G}, (c_e(\cdot))_{e \in \mathcal{E}} \ \right\rangle$. In this study, we use a mixed strategy for the stochastic choice behavior of the app users. Denote user $u$'s strategy as a probability mass function over the feasible paths set $\textbf{P}_u \in \Delta (\mathcal{S}_u)$. Here, $\textbf{P}_u=\{p_{u, i}\}_{i=1, \cdots, k_u}$, and $\mathcal{P}_u:=\Delta (\mathcal{S}_u)$ with $\mathcal{P}:=\Pi_{u \in \mathcal{U}}\mathcal{P}_u$. That is, $p_{u, i}\in[0,1]$ represents the probability that user $u$ places on the path $s_{u, i} \in \mathcal{S}_u$ with the constraints
\begin{equation}
    \sum_{i=1}^{k_u}p_{u, i}=1 , \ \forall u \in \mathcal{U}.
\end{equation}

Note that the probability $p_{u, i}$ can also be viewed as the expected volume generated by user $u$ on path $s_{u, i}$, which leads to the expected flow on edge $e$ as 
\begin{equation}
    f_e=\sum_{u \in \mathcal{U}}\sum_{i=1}^{k_u}p_{u, i}\mathbf{1}_{\{e \in s_{u, i}\}}.
\end{equation} Therefore, a generalized travel cost $C_{u, i}: \mathcal{P} \mapsto \mathbb{R}_{+}$ for user $u$ with each path $s_{u, i}$ can be written as the one that sums over all edge costs of the path:
\begin{equation}
    C_{u, i}(\textbf{P})=\sum_{e \in s_{u, i}}c_e(f_e),
\end{equation} for all app users' path choice probabilities $\textbf{P} \in \mathcal{P}$, with $\textbf{P}=\{\textbf{P}_u\}_{u \in \mathcal{U}}$.

In addition to those app users $\mathcal{U}$, the traffic condition, or the expected flow on each road, is influenced by other ``drivers'' in the urban transportation network. Hence, we denote the set of other drivers who dispense with the RS as $\Bar{\mathcal{U}}$. Similarly, each driver $\Bar{u} \in \Bar{\mathcal{U}}$ has an origin $O_{\Bar{u}}$ and destination $D_{\Bar{u}}$ pair that determines a set of feasible paths $S_{\Bar{u}} = \{1,\cdots , k_{\Bar{u}}\}$. Without loss of generality, we assume that those drivers' stochastic choice behavior can be modeled by the multinomial logit (\textbf{MNL}) model \cite{daskin1985urban}, which has been widely used in various fields such as economics, transportation \cite{lee2018comparison}, marketing, and social sciences, to analyze discrete choice among multiple alternatives. In the context of path choices, the model supposes that individuals make choices based on the cost they associate with each available option. Hence, based on the current (initial) cost (travel time) $C_{\Bar{u}, i}^o \in \mathbb R_{+}$ on path $s_{\Bar{u}, i}$, driver $\Bar{u}$'s path choice preference without recommendations is 
\begin{equation}
    p_{\Bar{u}, i}^o=\frac{e^{V_{\Bar{u}, i}}}{\sum_{i=1}^{k_{\Bar{u}}}e^{V_{\Bar{u}, i}}},
\label{eq:preference}
\end{equation} Note that the term $V_{\Bar{u}, i}=-\alpha_{\Bar{u}}-\beta_{\Bar{u}} C_{{\Bar{u}}, i}^o$, where $\alpha_{\Bar{u}} \in \mathbb{R}$ and $\beta_{\Bar{u}} \in \mathbb{R}$ indicate driver $\Bar{u}$'s valuations. Then, we denote $\textbf{P}^{o}=\{p_{\Bar{u}, i}^o\}_{ \Bar{u} \in \Bar{\mathcal{U}}, i=1, \cdots, k_{\Bar{u}}}$ as all drivers' path choice preferences.

Therefore, the recommendation $\textbf{P}^r=\{p_{u, i}^r\}_{ u \in \mathcal{U}, i=1, \cdots, k_{u}}$ to all the app users can be a \textbf{feasible} one if it satisfies the following constraints:
\begin{subequations}
  \begin{align}
    & \sum^{k_u}_{i=1}p^r_{u,i}\left(C_{u,i}(\textbf{P}^r_{u},\textbf{P}^r_{-u}, \textbf{P}^o)\right)-p^d_{u,i}\left(C_{u,i}(\textbf{P}^d_{u},\textbf{P}^r_{-u}, \textbf{P}^o)\right)\nonumber\\ 
    &\qquad \leq 0, 
      \forall u \in \mathcal{U}, \forall \ \textbf{P}^d_{u}=\{p^d_{u,i}\}_{i \in \{1,\cdots,k_u\}} \in \mathcal{P}_u, \label{eq:correl_cons}\\
    &\sum^{k_u}_{i=1}p^r_{u,i}=1, \forall u \in \mathcal{U}, \label{eq:prob_sum}\\
    & p^r_{u,i}\geq 0, \forall u \in \mathcal{U}, \forall s_{u, i} \in \mathcal{S}_u,\label{eq:prob_geq0}
  \end{align}
\label{prob:RS}
\end{subequations}with 
\begin{equation}
    f^r_e=\sum_{u \in \mathcal{U}}\sum_{i=1}^{k_u}p^r_{u, i}\mathbf{1}_{\{e \in s_{u, i}\}}, \forall e \in \mathcal{E}, \label{eq:flow_r}
\end{equation} denotes the expected flow caused by the recommendations to users on each road, and
\begin{equation}
    f^o_e=\sum_{\Bar{u} \in \Bar{\mathcal{U}}}\sum_{i=1}^{k_{\Bar{u}}}p^o_{\Bar{u}, i}\mathbf{1}_{\{e \in s_{\Bar{u}, i}\}}, \forall e \in \mathcal{E}, \label{eq:flow_o}
\end{equation} indicates the expected flow caused by other drivers on the road. The expected flow caused by users and drivers result in the cost of each user $u$'s choice $s_{u, i}$:
\begin{equation}
    C_{u, i}(\textbf{P}^r)=\sum_{e \in s_{u, i}}c_e(f^r_e+f^o_e), \forall u \in \mathcal{U}, \forall s_{u, i} \in \mathcal{S}_u, \label{eq:path_cost}
\end{equation} where the recommendation to user $u$ is $\textbf{P}^r_{u} \in \mathcal{P}_u$, and the recommendations to other users except $u$ is $\textbf{P}^r_{-u} \in \Pi_{u^{\prime} \in \mathcal{U}\setminus\{u\}}\mathcal{P}_{u^{\prime}}$. It is important to note that constraint \eqref{eq:correl_cons} coincides with the definition of Nash equilibrium, where each user will have no incentive to unilaterally deviate from the recommended mixed strategy. This RS takes into account the preferences of the users as a group and creates coordinated incentive-compatible recommendations, which differs from the recommendation of the shortest path to all users. The constraints \eqref{eq:prob_sum} and \eqref{eq:prob_geq0} ensure that $\{p_{u, i}^r\}_{ u \in \mathcal{U}, i=1, \cdots, k_{u}}$ are valid mixed strategies.

\subsection{Feasibility Analysis}


To analyze the feasibility of problem \eqref{prob:RS}, we can connect it with the Wardrop equilibrium \cite{WE}. 

\begin{definition}[Wardrop equilibrium]
    A Wardrop equilibrium (WE) is a feasible path flow and road load pair $(\boldsymbol{y}, \boldsymbol{f})$ with vectors $\boldsymbol{y}\in\mathbb{R}_{\ge 0}^{\Pi_{t \in \mathcal{T}}|\mathcal{S}_t| }$ and $\boldsymbol{f}\in \mathbb{R}_{\ge 0}^{|\mathcal{E}|}$, where $\boldsymbol{y}:=\{y_{t, i}\}_{t \in \mathcal{T}, s_{t, i} \in \mathcal{S}_t}$ and $\boldsymbol{f}:=\{f_e\}_{e \in \mathcal{E}}$ for which the prevailing cost of all used strategies is minimal, or, mathematically, for  demand $t$ and corresponding strategy set $\mathcal{S}_t$,
\begin{equation*}
    \begin{aligned}
        & \forall t \in \mathcal{T}, \forall s_{t, i}, s_{t, j} \in \mathcal{S}_t, \quad y_{t, i}>0 \\
        & \Longrightarrow \sum_{e \in s_{t, i}} c_e\left(f_e\right) \leq \sum_{e \in s_{t, j}} c_e\left(f_e\right),
    \end{aligned}
\end{equation*}
where the path flow vector $\boldsymbol{y}$ and the road load vector $\boldsymbol{f}$ need to satisfy the following constraints:
\begin{equation}
    \begin{aligned}
        d_t &=\sum_{s_{t, i} \in \mathcal{S}_t} y_{t, i} \text { with } y_{t, i} \geq 0, \quad \forall t \in \mathcal{T}, \\
        f_e & =\sum_{t \in \mathcal{T}} \sum_{s_{t, i} \in \mathcal{S}_t} y_{t, i} \mathbf{1}_{\{e \in s_{t, i}\}}, \quad \forall e \in \mathcal{E}.
    \end{aligned}
\label{eq:F_d}
\end{equation}

\label{def:WE}
\end{definition}

\begin{proposition}
    A Wardrop equilibrium flow-load pair $(\boldsymbol{y}, \boldsymbol{f})$ in Definition \ref{def:WE} that corresponds to the recommendation-load $(\textbf{P}^r$, $(f^r_e)_{e \in \mathcal{E}})$ is a feasible solution for the incentive-compatible navigational RS described in \eqref{prob:RS}-\eqref{eq:path_cost}.
\end{proposition}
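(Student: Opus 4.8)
The plan is to exhibit, starting from a given Wardrop equilibrium $(\boldsymbol y,\boldsymbol f)$, an explicit recommendation $\textbf{P}^r$ and then to verify the three families of constraints \eqref{eq:correl_cons}, \eqref{eq:prob_sum}, \eqref{eq:prob_geq0} one by one. First I would fix the correspondence precisely. Partition the app users $\mathcal{U}$ according to their OD pairs, so that each distinct OD pair is a demand $t\in\mathcal{T}$ carrying $d_t=|\{u\in\mathcal{U}:\theta_u=t\}|$ units with common feasible set $\mathcal{S}_t\equiv\mathcal{S}_u$, and set $p^r_{u,i}:=y_{t,i}/d_t$ for every user $u$ with $\theta_u=t$ and every $s_{u,i}\in\mathcal{S}_u$. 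The non-app drivers $\bar{\mathcal{U}}$ enter only through the fixed background load $f^o_e$ of \eqref{eq:flow_o}, so the edge costs used in Definition \ref{def:WE} are read as $c_e(\,\cdot\,+f^o_e)$; equivalently, the WE road load is $f_e=f^r_e+f^o_e$ with $f^r_e$ given by \eqref{eq:flow_r} under the above $\textbf{P}^r$. Under this identification the WE path cost $\sum_{e\in s_{t,i}}c_e(f_e)$ coincides with $C_{u,i}(\textbf{P}^r)$ of \eqref{eq:path_cost}.

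The constraints \eqref{eq:prob_sum} and \eqref{eq:prob_geq0} are then immediate: $p^r_{u,i}=y_{t,i}/d_t\ge 0$ since $y_{t,i}\ge 0$, and $\sum_{i=1}^{k_u}p^r_{u,i}=\tfrac1{d_t}\sum_{s_{t,i}\in\mathcal{S}_t}y_{t,i}=1$ by the demand equation in \eqref{eq:F_d}; the road-load identity \eqref{eq:flow_r} is consistent with the second equation of \eqref{eq:F_d} by construction. The substantive step is the incentive constraint \eqref{eq:correl_cons}. Fix $u$ with $\theta_u=t$ and set $C^\ast_t:=\min_{s_{t,j}\in\mathcal{S}_t}\sum_{e\in s_{t,j}}c_e(f_e)$. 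The defining property of the WE gives $\sum_{e\in s_{t,i}}c_e(f_e)=C^\ast_t$ for every $i$ with $y_{t,i}>0$ (equivalently $p^r_{u,i}>0$) and $\sum_{e\in s_{t,i}}c_e(f_e)\ge C^\ast_t$ for the remaining $i$; hence $\sum_i p^r_{u,i}C_{u,i}(\textbf{P}^r)=C^\ast_t$. It then suffices to show $\sum_i p^d_{u,i}\,C_{u,i}(\textbf{P}^d_u,\textbf{P}^r_{-u},\textbf{P}^o)\ge C^\ast_t$ for every deviation $\textbf{P}^d_u\in\mathcal{P}_u$, which yields \eqref{eq:correl_cons}.

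I expect this last inequality to be the main obstacle, because a unilateral deviation of user $u$ perturbs the realized flow and so moves the edge costs away from their WE values. The way through is the mean-field / nonatomic reading that underlies the model: a single user's probability mass is negligible relative to the aggregate demand, so $C_{u,i}(\textbf{P}^d_u,\textbf{P}^r_{-u},\textbf{P}^o)$ agrees with the WE cost $\sum_{e\in s_{t,i}}c_e(f_e)\ge C^\ast_t$ for every $i$ (alternatively, one invokes monotonicity of each $c_e(\cdot)$ to argue that shifting mass onto a currently cheapest path cannot push its cost below $C^\ast_t$). Either route gives $\sum_i p^d_{u,i}\,C_{u,i}(\textbf{P}^d_u,\textbf{P}^r_{-u},\textbf{P}^o)\ge\sum_i p^d_{u,i}\,C^\ast_t=C^\ast_t$, which closes the argument.

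The remainder is bookkeeping: checking that the index sets $\mathcal{S}_u$ and $\mathcal{S}_t$ are identified consistently across all users of a given demand, and that the load vector $\boldsymbol f$ attached to the WE is exactly $f^r_e+f^o_e$, so that the cost expressions on the WE side and on the RS side in \eqref{eq:path_cost} literally coincide rather than merely being analogous. Once these identifications are in place, feasibility of $\textbf{P}^r$ for \eqref{prob:RS} follows from the three verifications above.
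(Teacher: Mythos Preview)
Your approach is the paper's: set up the dictionary between WE quantities and RS quantities, read off \eqref{eq:prob_sum}--\eqref{eq:prob_geq0} from \eqref{eq:F_d}, then derive \eqref{eq:correl_cons} from the WE minimum-cost property. The only bookkeeping difference is that the paper assigns each user its own demand type with $d_t=1$ (so $p^r_{u,i}=y_{t,i}$ directly), whereas you group users by OD pair and normalize by $d_t$; your identification is in fact the one the paper itself adopts later in Remark~1, and either works.

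You are more careful than the paper on the substantive step: a unilateral deviation by user $u$ perturbs the load, so the post-deviation costs $C_{u,i}(\textbf{P}^d_u,\textbf{P}^r_{-u},\textbf{P}^o)$ need not equal the WE costs, and the paper's proof simply asserts ``no incentive to deviate'' without confronting this. Your nonatomic/mean-field reading is the correct closure and is precisely the regime the paper implicitly works in (cf.\ the variational-inequality discussion immediately after the proof, where the self-congestion term $\sum_j p^r_{u,j}\,\partial C_{u,j}/\partial p^r_{u,i}$ is silently dropped to identify the two VIs). Do \emph{not}, however, rely on the monotonicity alternative you sketch: with atomic users it fails. Take two users sharing one OD pair and two paths with costs $c_1(x)=x$ and $c_2(x)\equiv 1$; the WE is $p^r_u=(\tfrac12,\tfrac12)$ with common path cost $1$, but if user $1$ deviates to $(\tfrac14,\tfrac34)$ the path costs become $\tfrac34$ and $1$ and her expected cost drops to $\tfrac14\cdot\tfrac34+\tfrac34\cdot 1=\tfrac{15}{16}<1$, violating \eqref{eq:correl_cons}. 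So the nonatomic assumption is load-bearing here, not a convenience.
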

\begin{proof}

A correspondence can be observed between Definition \ref{def:WE} and the problem with the RS in \eqref{prob:RS}. That is, type $t$ and $\mathcal{S}_t$ in the definition correspond to user $u$ and the set of feasible paths $\mathcal{S}_u$, the road load vector $f_e$ in the definition is the same as the expected flow $f_e$ on edge $e$, and the path flow $y_{t, i}$ in the definition corresponds to the probability $p_{u, i}$ that user $u$ places on the path $s_{u, i}$. Hence, letting $d_t=1$ in the definition will lead to $\sum^{k_u}_{i=1}p^r_{u,i}=1$ that naturally satisfies constraint \eqref{eq:prob_sum}.

As a result, the WE flow-load pair $(\boldsymbol{y}, \boldsymbol{f})$ that corresponds to the ($\textbf{P}^r$, $(f^r_e)_{e \in \mathcal{E}}$) pair in the problem can be viewed as a feasible solution for the RS. More specifically, if ($\textbf{P}^r$, $(f^r_e)_{e \in \mathcal{E}}$) is a WE, then for each user $u$ in constraint \eqref{eq:correl_cons}:
Since only paths with minimum cost are used, all the paths used by any given user have the same cost. That is, for $p_{u, i} > 0$, the cost $C_{u,i}(\textbf{P}^r_{u},\textbf{P}^r_{-u}, \textbf{P}^o)$ should be the same for $u$. The overall expected cost is independent of the probability $p_{u, i}$ of $C_{u,i}(\textbf{P}^r_{u},\textbf{P}^r_{-u}, \textbf{P}^o)$. Lastly, note that if ($\textbf{P}^r$, $(f^r_e)_{e \in \mathcal{E}}$) is an equilibrium, there is no incentive for a user to deviate to any other $\textbf{P}^d_u \in \mathcal{P}_u$.
\end{proof}


Note that the connection can also be observed through variational inequalities \cite{sorin2015finite}. To begin with, denote the expected cost $\sum^{k_u}_{i=1}p^r_{u,i}\left(C_{u,i}(\textbf{P}^r_{u},\textbf{P}^r_{-u}, \textbf{P}^o)\right)$ in problem \eqref{prob:RS} as $H_u(\textbf{P}^r_{u}, \textbf{P}^r_{-u})$ for each user $u$, and an equilibrium is the recommendation $\textbf{P}^r=\{p_{u, i}^r\}_{ u \in \mathcal{U}, i=1, \cdots, k_{u}}$ that satisfies $H_u(\textbf{P}^r_{u}, \textbf{P}^r_{-u}) \leq H_u(\textbf{P}^d_{u}, \textbf{P}^r_{-u}), \forall \ \textbf{P}^d_{u} \in \mathcal{P}_u, \forall u \in \mathcal{U}$. Let $\nabla_u H_u(\textbf{P}^r)$ represent the gradient of $H_u(\textbf{P}^r_{u}, \textbf{P}^r_{-u})$ with respect to each element of $\textbf{P}^r_{u}$ and $\frac{\partial H_u(\textbf{P}^r)}{\partial p^r_{u, i}}=C_{u, i}(\textbf{P}^r)+\sum_{j \in \mathcal{S}_u}p^r_{u, j}\frac{\partial C_{u, j}(\textbf{P}^r)}{\partial p^r_{u, i}}$. Then, a solution of \eqref{eq:correl_cons} satisfies 
\begin{equation}
    \sum_{u \in \mathcal{U}} \langle \nabla_u H_u(\textbf{P}^r), \textbf{P}^r_u -\textbf{P}^d_u \rangle \leq 0, \forall \ \textbf{P}^d \in \mathcal{P}.
\end{equation} Similarly, from the definition of WE, let each type $t$ be associated with one corresponding user $u$. Then, we can get 
\begin{equation}
    \sum_{u \in \mathcal{U}} \langle C_u(\textbf{P}^r), \textbf{P}^r_{u}-\textbf{P}^d_{u} \rangle \leq 0, \forall \ \textbf{P}^d \in \mathcal{P},
\end{equation} where $C_u(\textbf{P}^r)=\{C_{u, i}(\textbf{P}^r)\}_{s_{u, i} \in \mathcal{S}_u}$. Then, the equilibrium for the RS's problem and the WE can be treated as equivalent in the following.
\begin{definition}
For the game $\Gamma(\mathscr{G}, \Phi)$ with structural objects $\mathscr{G}$ in Section \ref{sec:NRS} and evaluation functions $\Phi=\{\Phi_u\}_{u \in \mathcal{U}}$, where $\Phi_u: \mathcal{P} \mapsto \mathbb{R}_{+}^{|\mathcal{S}_u|}, \forall u \in \mathcal{U}$, the Nash equilibria $NE(\Phi)$ is the set of $\textbf{P}^r \in \mathcal{P}$ satisfying $\sum_{u \in \mathcal{U}}\langle \Phi_u(\textbf{P}^r), \textbf{P}^r_u-\textbf{P}^d_u \rangle \leq 0, \forall \  \textbf{P}^d \in \mathcal{P}$.
\label{def:NE_Phi}
\end{definition} The evaluation functions are $\{\nabla_u H_u\}_{u \in \mathcal{U}}$, $\nabla_u H_u: \mathcal{P} \mapsto \mathbb{R}_{+}^{|\mathcal{S}_u|}$ for the RS and $\{C_u\}_{u \in \mathcal{U}}$, $C_u: \mathcal{P} \mapsto \mathbb{R}_{+}^{|\mathcal{S}_u|}$ in WE.

\subsection{Solving for the Equilibrium}

Knowing that the feasible recommendation for problem \eqref{prob:RS} exists, we then proceed to find the mixed strategy $\textbf{P}^r_{u}$ for each user $u$. First, for user $1$'s constraint \eqref{eq:correl_cons} in the RS's problem:
$$\sum^{k_1}_{i=1}p^r_{1,i}\left(C_{1,i}(\textbf{P}^r_{1},\textbf{P}^r_{-1}, \textbf{P}^o)\right)-p^d_{1,i}\left(C_{1,i}(\textbf{P}^d_{1},\textbf{P}^r_{-1}, \textbf{P}^o)\right) \leq 0.$$ 
Given other users' strategy $\textbf{P}^r_{-1}$, user $1$'s best response (or the best recommendation to user $1$) is the following.
\begin{equation}
\begin{aligned}
    \argmin_{p^r_{1, 1}, \cdots, p^r_{1, k_1}} &\sum^{k_1}_{i=1}p^r_{1,i}C_{1,i}(\textbf{P}^r_{1},\textbf{P}^r_{-1}, \textbf{P}^o),\\
    \text{s.t. } &\sum^{k_1}_{i=1}p^r_{1,i}=1,\ \ p^r_{1,i}\geq 0, \forall s_{1, i} \in \mathcal{S}_1.
\end{aligned}
\label{eq:user1} 
\end{equation} 
Here, the problem in \eqref{eq:user1} computes mixed strategies $p^r_{1, i}, \cdots, p^r_{1, k_1}$ that minimize user $1$'s expected cost. Then, other users' constraints \eqref{eq:correl_cons} in the RS's problem follow a similar procedure. Note that the additional expected flow resulting from other drivers' path choice preferences $\textbf{P}^o$ does not change with $\textbf{P}^r$. More specifically, $c_e(f_e)=t_e(1+\eta(\frac{f_e}{k_e})^{\zeta})$ can be rewritten as $c_e(f^r_e, f^o_e)=t_e(1+\eta(\frac{f^r_e+f^o_e}{k_e})^{\zeta})$, where $f^r_e$ comes from users' mixed strategies $\textbf{P}^r$, and $f^o_e$ remains a constant comes from other drivers' choice preferences $\textbf{P}^o$. 

To this end, since \eqref{eq:user1} is a constrained optimization problem, we can use the penalty methods (PM) or other techniques to solve the problem for each user, and iteratively find the equilibrium solution. The evolution of the iterative process adopts the best response dynamics 
$\mathcal{D}^{\Phi}=\{\mathcal{D}^{\Phi}_u\}_{u \in \mathcal{U}}$, where at iteration $n$ and $n+1$ for user $u$,
\begin{equation}
\mathcal{D}^{\Phi}_u(\textbf{P}^{r, (n)})=BR_u(\textbf{P}^{r, (n)})-\textbf{P}^{r, (n)}_u,
\label{eq:BR_dynamics}
\end{equation} with 
$BR_u(\textbf{P}^{r, (n)})\! =\! \{\textbf{P}^{r,(n+1)}_u\! \in\! \mathcal{P}_u, \langle \textbf{P}^{r,(n+1)}_u-\textbf{P}^d_u, \Phi(\textbf{P}^{r, (n)}) \rangle $ $ \leq 0, \forall \textbf{P}^d_u \in \mathcal{P}_u \}$, and can be shown to satisfy Nash stationarity.
\begin{definition}[Nash Stationarity]
    For the game $\Gamma(\mathscr{G}, \Phi)$ with evaluation functions $\Phi=\{\Phi_u\}_{u \in \mathcal{U}}$, where $\Phi_u: \mathcal{P} \mapsto \mathbb{R}_{+}^{|\mathcal{S}_u|}$, the dynamics $\mathcal{D}^{\Phi}=\{\mathcal{D}^{\Phi}_u\}_{u \in \mathcal{U}}$ with $\mathcal{D}^{\Phi}_u: \mathcal{P} \mapsto [-1, 1]^{|\mathcal{S}_u|}$ satisfy Nash stationarity if $\mathcal{D}^{\Phi}(\textbf{P}^r)=0$ if and only if $\textbf{P}^r \in \mathcal{P}$ is an equilibrium $NE(\Phi)$.
\label{def:Nash_stationary}
\end{definition} 

\begin{proposition}
    The RS's iterative process that adopts the best response dynamics $\mathcal{D}^{\Phi}=\{\mathcal{D}^{\Phi}_u\}_{u \in \mathcal{U}}$ in \eqref{eq:BR_dynamics} satisfies Nash stationarity as in Definition \ref{def:Nash_stationary}.
\end{proposition}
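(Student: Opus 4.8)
The plan is to prove both implications of the biconditional in Definition \ref{def:Nash_stationary} by unpacking the definition of $BR_u$ given after \eqref{eq:BR_dynamics} and then exploiting the product structure $\mathcal{P}=\Pi_{u \in \mathcal{U}}\mathcal{P}_u$. The key observation is that requiring $\mathcal{D}^{\Phi}_u(\textbf{P}^r)=0$ --- equivalently, $0$ belonging to the coordinatewise difference set $BR_u(\textbf{P}^r)-\textbf{P}^r_u$ --- is the same as $\textbf{P}^r_u \in BR_u(\textbf{P}^r)$, which by definition is precisely the per-user variational inequality $\langle \textbf{P}^r_u-\textbf{P}^d_u, \Phi_u(\textbf{P}^r)\rangle \le 0$ for every $\textbf{P}^d_u \in \mathcal{P}_u$. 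So the statement reduces to showing that the aggregate inequality $\sum_{u \in \mathcal{U}}\langle \Phi_u(\textbf{P}^r), \textbf{P}^r_u-\textbf{P}^d_u\rangle \le 0$ for all $\textbf{P}^d \in \mathcal{P}$ (the defining condition of $NE(\Phi)$ in Definition \ref{def:NE_Phi}) holds if and only if each of the per-user inequalities holds.

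For the ``$\Leftarrow$'' direction, I would assume $\textbf{P}^r \in NE(\Phi)$ and fix an arbitrary user $u$ together with an arbitrary $\textbf{P}^d_u \in \mathcal{P}_u$. Feeding into Definition \ref{def:NE_Phi} the test profile $\textbf{P}^d$ that equals $\textbf{P}^r_{u'}$ in every coordinate $u' \ne u$ and equals $\textbf{P}^d_u$ in coordinate $u$ kills all summands except the $u$-th, leaving $\langle \Phi_u(\textbf{P}^r), \textbf{P}^r_u-\textbf{P}^d_u\rangle \le 0$. Since $u$ and $\textbf{P}^d_u$ were arbitrary, $\textbf{P}^r_u \in BR_u(\textbf{P}^r)$ for every $u$, i.e. $\mathcal{D}^{\Phi}(\textbf{P}^r)=0$.

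For the ``$\Rightarrow$'' direction, I would assume $\mathcal{D}^{\Phi}(\textbf{P}^r)=0$, so that $\textbf{P}^r_u \in BR_u(\textbf{P}^r)$ and hence $\langle \Phi_u(\textbf{P}^r), \textbf{P}^r_u-\textbf{P}^d_u\rangle \le 0$ for every $u$ and every $\textbf{P}^d_u \in \mathcal{P}_u$. Given any $\textbf{P}^d \in \mathcal{P}$, write it as $(\textbf{P}^d_u)_{u \in \mathcal{U}}$, apply the per-user inequality to each of its blocks, and sum over $u$ to obtain $\sum_{u \in \mathcal{U}}\langle \Phi_u(\textbf{P}^r), \textbf{P}^r_u-\textbf{P}^d_u\rangle \le 0$; since $\textbf{P}^d$ was arbitrary, this is exactly the condition that $\textbf{P}^r \in NE(\Phi)$.

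I expect the ``obstacle'' here to be bookkeeping rather than anything substantive. One should check that $BR_u(\textbf{P}^r)$ is nonempty, so that $\mathcal{D}^{\Phi}$ is well defined at every point; this follows from the Weierstrass theorem, since $BR_u(\textbf{P}^r)$ is the minimizer set of the continuous (indeed linear) map $q \mapsto \langle q, \Phi_u(\textbf{P}^r)\rangle$ over the nonempty compact simplex $\mathcal{P}_u$. One should also note that $\mathcal{D}^{\Phi}_u$ does map into $[-1,1]^{|\mathcal{S}_u|}$, which is immediate because it is a difference of two points of $\Delta(\mathcal{S}_u)\subseteq[0,1]^{|\mathcal{S}_u|}$. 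Finally, if $BR_u$ is regarded as genuinely set-valued, then ``$\mathcal{D}^{\Phi}(\textbf{P}^r)=0$'' should be read as $0$ lying in the coordinatewise difference set, and both implications above go through verbatim.
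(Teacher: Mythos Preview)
Your argument is correct and follows the same route as the paper: both reduce Nash stationarity to the observation that $\mathcal{D}^{\Phi}(\textbf{P}^r)=0$ iff $\textbf{P}^r_u \in BR_u(\textbf{P}^r)$ for every $u$, and then identify this per-user best-response condition with the aggregate variational inequality defining $NE(\Phi)$. The paper compresses this into a single sentence citing Definitions \ref{def:NE_Phi} and \ref{def:Nash_stationary}, whereas you spell out the two implications and the well-definedness checks, but the substance is identical.
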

\begin{proof}
    By Definitions \ref{def:NE_Phi} and \ref{def:Nash_stationary}, since $\mathcal{D}^{\Phi}(\textbf{P}^{r})=0$ in \eqref{eq:BR_dynamics} if and only if $\textbf{P}^{r}_u \in BR_u(\textbf{P}^{r}), \forall \ u \in \mathcal{U}$, $\textbf{P}^r\in NE(\Phi)$.
\end{proof}

The procedure for solving the RS's problem is described in Algorithm 1, and the numerical results are shown and discussed in Section \ref{sec:experiment}.


\begin{algorithm}
  \caption{\textbf{R}ecommendation \textbf{S}ystem}\label{algo:RS}
  \begin{algorithmic}[1]
    \State\textbf{Input} $\left\langle\mathcal{U}, (\mathcal{S}_u)_{u \in \mathcal{U}}, \mathcal{\Bar{U}}, (\mathcal{S}_{\Bar{u}})_{\Bar{u} \in \mathcal{\Bar{U}}}, \mathcal{G}, (c_e(\cdot))_{e \in \mathcal{E}}\right\rangle$,
    \State\textbf{Initialize} recommendation $\textbf{P}^r$,
      \State\textbf{Obtain} $\textbf{P}^o$ for other drivers $\mathcal{\Bar{U}}$,
      \While{$n < \text{MaxIter}$}
        \State choose $u=n\mod|\mathcal{U}|$,
        \State solve $\argmin_{\textbf{P}^{r}_u} \sum^{n_u}_{i=1}p^{r}_{u,i}C_{u,i}(\textbf{P}^{r}_{u},\textbf{P}^{r,(n)}_{-u}, \textbf{P}^o)$,
        \State update $\textbf{P}^{r,(n+1)}=\textbf{P}^{r,(n+1)}_{u}\cup \textbf{P}^{r,(n)}_{-u}$
      \EndWhile
      \State \textbf{Return} optimal $\textbf{P}^r$
  \end{algorithmic}
\end{algorithm}
\vspace{-3mm}

\section{Attack Models}
\subsection{Attack Methods and Consequences}
As the RS offers recommendations based on users' OD pairs, transportation network, and traffic conditions, it becomes susceptible to a range of vulnerabilities, as depicted in red in Fig.  \ref{fig:RS_model}. Those potential attacks can be but not limited to the following: 1) Sensor Manipulation Attacks: These involve tampering with traffic sensors to manipulate the collected data, consequently impacting the \textit{cost} used by the RS; 2) Communication Attacks (e.g., Man-in-the-Middle): These types hijack and alter the path recommendations for users; 3) Misinformation Attacks on Drivers: These attacks induce changes in drivers' choice preferences through misinformation (e.g., fabricated accident reports or anonymous biased reviews), leading to shifts in \textit{expected flow and cost} as evaluated by the RS; 4) Misinformation Attacks on Demand (e.g., Sybil-based attack): Attackers can fabricate non-existent users with fictitious demands; 5) Feedback Availability: Attackers could exploit delays in information structure or initiate denial-of-service attacks on the sensor data.

Although \textit{these attacks may differ in their methods, their consequences can be classified into two major categories: misinformed costs and misinformed demands}. In this study, we focus on the misinformed demands. The attacker is capable of creating Sybil users with certain origins and destinations that change the demand of the corresponding OD pairs, which will influence the resulting recommendation, that is, the feasible ($\textbf{P}^r$, $(f^r_e)_{e \in \mathcal{E}}$) pair in the RS's problem will be different.

\subsection{Attacker's Objectives}
Imagine an attacker driven by self-interest, in conflict with the overall social welfare goal of reducing congestion. This scenario can be studied at both local targeted and network-wide levels: the former pertains to specific groups or locations, while the latter considers the system-wide impact.

\noindent\textbf{Targeted Attacks: }The attacker seeks to bias the system by suggesting paths that favor particular groups (e.g., higher-paying users) or businesses (e.g., those paying the attacker to ensure users see particular ads or pass by their shops). This manipulation promotes or harms specific interests.

\noindent\textbf{Network-Wide Attacks: }The attacker  aims to disrupt the system by increasing delays or congestion indices across the network, consequently raising the overall traffic time cost. These actions could harm the system's reputation, leading to user dissatisfaction or a loss of trust in the RS.

In our ensuing discussion, we will delve into the local perspective, as existing literature such as \cite{pan2022poisoned} has already focused on the network-wide attack.

\subsection{Local Attack through Misinformed Demands}\label{sec:md_problem}

For the sake of simplicity, we suppose that the RS selects the WE flow-load pair as a practical recommendation, with the effects introduced by other non-user drivers already encompassed within the cost function for the ensuing discussion. Then, the RS's problem can be reformulated as the following using the notations from Definition \ref{def:WE}.

\subsubsection{Attacking the WE-based RS}
Let a set $\mathcal{T} \subset \mathcal{V} \times \mathcal{V}$ represents the origin-destination (OD) pairs. For each OD pair $t \in \mathcal{T}$, a flow of demand from user $d_t=\sum_{u \in \mathcal{U}} \mathbf{1}_{\{\theta_u=t\}}$ must be routed from the corresponding origin to its destination. Then, the feasible path set for each $t$ is $\mathcal{S}_t=\{1, \cdots, k_t\}$. We denote $F(\boldsymbol{d})$ as the set of flow-load pair $(\boldsymbol{y}, \boldsymbol{f})$ that satisfies both constraints in \eqref{eq:F_d} with the demand vector $\boldsymbol{d}:=\left(d_{t}\right)_{t \in \mathcal{T}}$, where the path flow vector $\boldsymbol{y}:=\left(y_{t, s}\right)_{t \in \mathcal{T}, s \in \mathcal{S}_t}$ and the edge load vector $\boldsymbol{f}:=\left(f_e\right)_{e \in \mathcal{E}}$. Then, according to Beckmann \cite{beckmann1956studies}, Wardrop equilibrium can be computed as the solution to the following optimization problem,
\begin{subequations}
    \begin{align}
        \min_{\boldsymbol{y}, \boldsymbol{f}} \ &\sum_{e \in \mathcal{E}} \int_{0}^{f_e}c_e(z)dz\\
        \text{s.t. } \ & (\boldsymbol{y}, \boldsymbol{f}) \in F(\boldsymbol{d}).
    \end{align}
\label{prob:WE_xy}
\end{subequations}
We represent the optimization problem as $W(\boldsymbol{d})$, and the corresponding WE solution pair as $(\widehat{\boldsymbol{y}}, \widehat{\boldsymbol{f}})$. 
\begin{remark}
    When delivering recommendations to user $u$ with the OD pair $\theta_u=t \in \mathcal{T}$, the RS suggests a mixed strategy $p_{u, i}$ over the feasible $s_{u, i} \in \mathcal{S}_u=\mathcal{S}_t$, each $p_{u, i}=y_{t, s_{u, i}}/d_t$.
\end{remark}
Subsequently, consider the situation where a Sybil-based attacker generates non-existent demands $\boldsymbol{d}^a \in \mathbb{Z}_{\ge 0}^{|\mathcal{T}|}$ using Sybil (fake) users. Then, the RS will need to consider an aggregated demand of $\boldsymbol{d}^{\prime}=\boldsymbol{d} + \boldsymbol{d}^a$. Note that for each OD pair $t$, the demand $d_t^{\prime}$ under attack consists of $d_t + d_t^a$. Without loss of generality, we can assume that a proportion of $\frac{d_t}{d_t + d_t^a}$ of the WE expected path flow $\widehat{y}_{t, s}^{\prime}$ with respect to $W(\boldsymbol{d}^{\prime})$ is caused by true users. Hence, denote $\widehat{y}^u_{t, s}=\frac{d_t}{d_t + d_t^a}\widehat{y}_{t, s}^{\prime}$. In a local targeted attack, the attacker aims to generate a certain level of expected flow caused by true app users on the target edge $e^{\prime}$. That is, the attacker aims to make $\sum_{t \in \mathscr{T}}\sum_{s \in \mathscr{S}_t}\widehat{y}^u_{t, s}\mathbf{1}_{\{e^{\prime} \in s\}}$ achieve a desired level $\gamma \in \mathbb{R}_{\ge 0}$.
\begin{subequations}
    \begin{align}
        \min_{\boldsymbol{d}^a} \ &\sum_{t \in \mathcal{T}} d_t^a\\
        \text{s.t.} \ &\sum_{t \in \mathcal{T}}\sum_{s \in \mathcal{S}_t}\widehat{y}^u_{t, s}\mathbf{1}_{\{e^{\prime} \in s\}}\geq \gamma, \\
        &(\widehat{\boldsymbol{y}}, \widehat{\boldsymbol{f}}) \in \argmin W(\boldsymbol{d} + \boldsymbol{d}^a), \\
        & d_t^a \geq 0, \forall t \in \mathcal{T}.
    \end{align}
\label{prob:RS_attack}
\end{subequations}
The attacker's problem can then be computed by Algorithm 2.
\begin{algorithm}
  \caption{\textbf{Mis}informed \textbf{D}emands \textbf{A}ttack}\label{algo:MisDA}
  \begin{algorithmic}[1]
    \State\textbf{Input} $\left\langle \ \mathcal{U}, (\mathcal{S}_u)_{u \in \mathcal{U}}, \mathcal{G}, (c_e(\cdot))_{e \in \mathcal{E}} \ \right\rangle$,
    \State\textbf{Initialize} fabricated demand $\boldsymbol{d^a}$,
      \State\textbf{Obtain} true demand $\boldsymbol{d}$ from $(\mathcal{S}_u)_{u \in \mathcal{U}}\mbox{ and } \mathcal{G}$,
      \While{desired result from $\boldsymbol{d^a}$ is not met}
        \While{$(\boldsymbol{y}, \boldsymbol{f})$ does not reach WE}
        \State gradient descent on \eqref{prob:WE_xy} using PM,
        \EndWhile
        \State\textbf{Obtain} WE $(\widehat{\boldsymbol{y}}, \widehat{\boldsymbol{f}})$,
        \State gradient descent on \eqref{prob:RS_attack} using PM,
      \EndWhile
      \State \textbf{Return} optimal $\boldsymbol{d^a}^*$
  \end{algorithmic}
\end{algorithm}
\vspace{-2mm}
Lastly, we need to notice that the level $\gamma$ in problem \eqref{prob:RS_attack} can not be arbitrarily large, which leads us to the following.
\begin{remark}
    The edge load $f_e$ is bounded by the total demand of the true user $d_T=\sum_{t \in \mathcal{T}}d_t$. Hence, the attacker's desired level $\gamma$ is also upper-bounded by $d_T=\sum_{t \in \mathcal{T}}d_t$.
\end{remark}

\begin{figure*}[!h]
    \centering
    \begin{subfigure}[t]{0.327\textwidth}
        \centering
        \includegraphics[width=\textwidth]{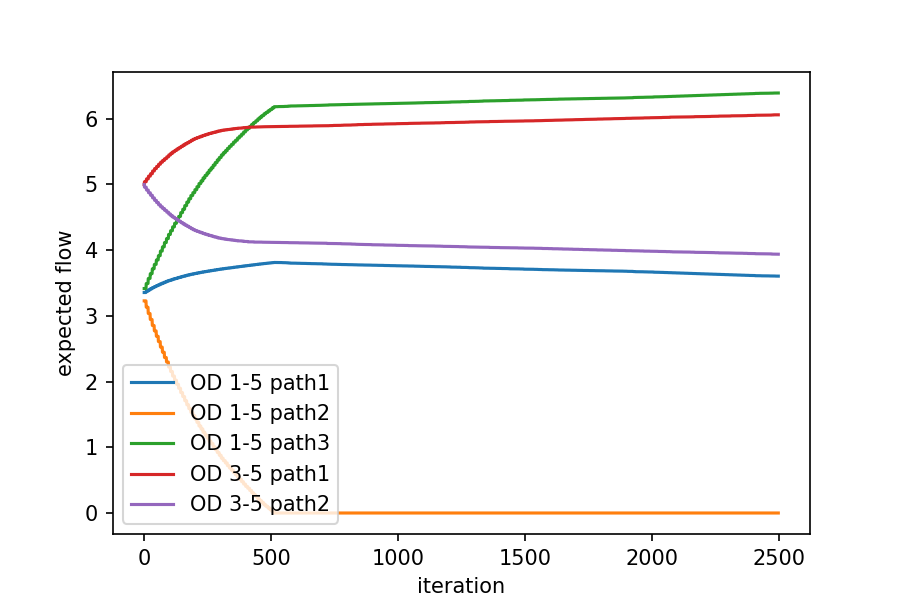}
        \caption{Recommendations without attack.}
        \label{fig:woa_result}
    \end{subfigure}
    \hfill
    \begin{subfigure}[t]{0.327\textwidth}
        \centering
        \includegraphics[width=\textwidth]{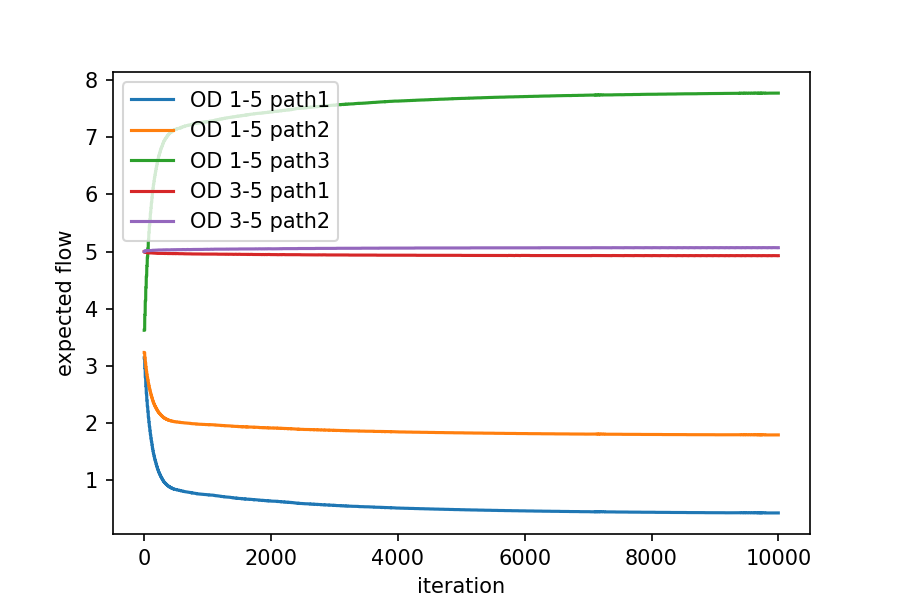}
        \caption{Recommendations under uniform attack.}
        \label{fig:wa_uni_result}
    \end{subfigure}
    \hfill
    \begin{subfigure}[t]{0.327\textwidth}
        \centering
        \includegraphics[width=\textwidth]{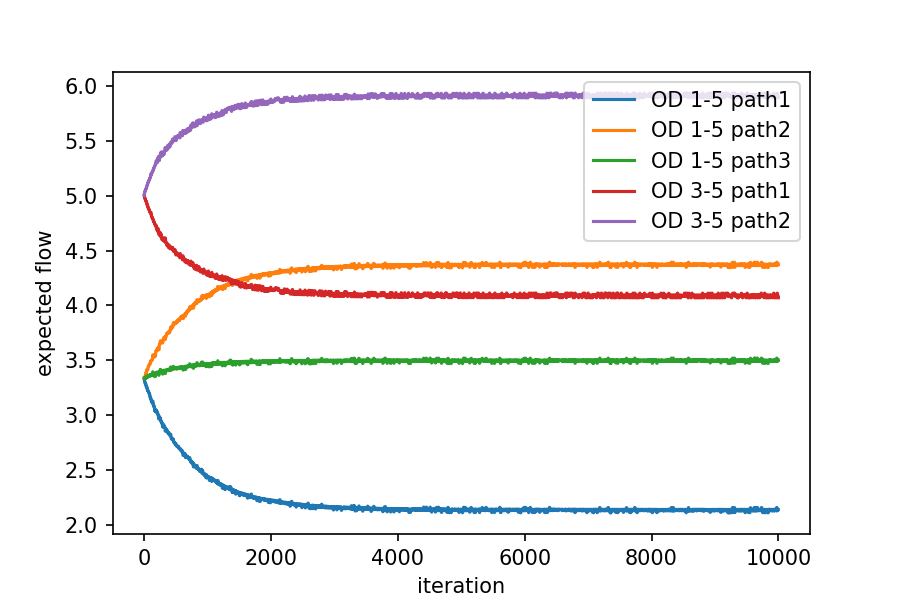}
        \caption{Recommendations under optimal attack.}
        \label{fig:wa_result}
    \end{subfigure}
    \caption{The experiment results. Without attack: OD pairs 1-5 and 3-5 are recommended path 3 and path 1 with a higher probability, respectively. Under uniform attack: OD pairs 1-5 is still recommended path 3 with a higher probability, while both choices for OD pair 3-5 are recommended with a probability close to 0.5. Under optimal attack: OD pairs 1-5 and 3-5 are recommended path 2, containing the targeted road with a higher probability.}
\label{fig:exp_result}
\vspace{-5mm}
\end{figure*}

\section{Characterization for the Misinformed\\ Demands Attack}
\subsection{Optimality Conditions for WE-based Recommendation}
Under the assumption that the cost functions $c_e(f_e)$ are continuous and non-decreasing in $f_e$, a pair $(\boldsymbol{y}, \boldsymbol{f}) \in F(\boldsymbol{d})$ is a minimizer of
$W(\boldsymbol{d})$ if and only if it satisfies the following KKT conditions.
\begin{subequations}
    \begin{align}
        c_e(f_e)-\lambda_e =0, \ &\forall e \in \mathcal{E},\\
        -\nu_t + \sum_{e \in \mathcal{E}}\lambda_e \mathbf{1}_{\{e \in s\}} - \mu_{t, s}=0, \ &\forall s \in \mathcal{S}_t, \forall t \in \mathcal{T},\\
        \mu_{t, s}y_{t, s}=0,\ &\forall s \in \mathcal{S}_t, \forall t \in \mathcal{T},
    \end{align}
\label{eq:KKT}
\end{subequations} 
with Lagrangian multipliers $\nu_t \in \mathbb{R}_{+}, \forall t \in \mathcal{T}$, $\lambda_e \in \mathbb{R}, \forall e \in \mathcal{E}$, and $\mu_{t, s} \geq 0, \forall s \in \mathcal{S}_t, \forall t \in \mathcal{T}$. Then, a pair $(\boldsymbol{y}, \boldsymbol{f})$ satisfying the constraints with multipliers $-\boldsymbol{\nu}=-(\nu_t)_{t\in \mathcal{T}}, \boldsymbol{\lambda}=(\lambda_e)_{e \in \mathcal{E}}, \boldsymbol{\mu}=(\mu_{t, s})_{s \in \mathcal{S}_t, t\in \mathcal{T}}$ also satisfies $$\nu_t=\sum_{e \in s}c_e(f_e)-\mu_{t, s}
\begin{cases}
    =\sum_{e \in s}c_e(f_e), \ y_{t, s} > 0, \\
    \leq \sum_{e \in s}c_e(f_e), \ y_{t, s} = 0, 
\end{cases}$$ which coincides with the definition of WE.
\vspace{-1mm}
\subsection{Impact of Demand Changes on WE}
Then, we aim to examine how the WE pair $(\widehat{\boldsymbol{y}}, \widehat{\boldsymbol{f}})$ can be influenced by changes in the demand $\boldsymbol{d}$ according to \cite{still2018lectures}.

\vspace{-1mm}
\begin{proposition}
    Let the pair $(\boldsymbol{y}, \boldsymbol{f})$ with the corresponding multipliers $\boldsymbol{\nu}$ and $\boldsymbol{\mu}$ described in \eqref{eq:KKT} be a WE for demand $\boldsymbol{d}$ and $(\boldsymbol{y}', \boldsymbol{f}')$ with corresponding multipliers $\boldsymbol{\nu}'$ and $\boldsymbol{\mu}'$ be a WE for demand $\boldsymbol{d}'$. Then, $(\boldsymbol{\nu}'-\boldsymbol{\nu})^{T}(\boldsymbol{d}'-\boldsymbol{d}) \geq \boldsymbol{\mu}'^{T}\boldsymbol{y}+\boldsymbol{\mu}^{T}\boldsymbol{y}' \geq 0$.
\label{prop:diff_d}
\end{proposition}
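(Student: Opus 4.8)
The plan is to exploit the KKT characterization \eqref{eq:KKT} together with complementary slackness and the monotonicity of the edge cost functions. Write $C_{t,s}(\boldsymbol{f}):=\sum_{e\in s}c_e(f_e)$ for the cost of path $s$ under edge loads $\boldsymbol{f}$. From \eqref{eq:KKT}(a)--(b) one obtains, for every $t\in\mathcal{T}$ and every $s\in\mathcal{S}_t$, the identity $\nu_t=C_{t,s}(\boldsymbol{f})-\mu_{t,s}$ (this is exactly the relation displayed just after \eqref{eq:KKT}), and similarly $\nu'_t=C_{t,s}(\boldsymbol{f}')-\mu'_{t,s}$ for the WE at demand $\boldsymbol{d}'$. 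The two constraints defining $F(\boldsymbol{d})$ in \eqref{eq:F_d}, namely $d_t=\sum_{s\in\mathcal{S}_t}y_{t,s}$ and $f_e=\sum_{t}\sum_{s}y_{t,s}\mathbf{1}_{\{e\in s\}}$, will be used repeatedly to pass between the path and edge representations of a flow.

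First I would expand $\sum_{t}\nu'_t(d'_t-d_t)$ using $d'_t=\sum_s y'_{t,s}$ and $d_t=\sum_s y_{t,s}$, obtaining $\sum_{t,s}\nu'_t y'_{t,s}-\sum_{t,s}\nu'_t y_{t,s}$. In the first sum I substitute $\nu'_t y'_{t,s}=(C_{t,s}(\boldsymbol{f}')-\mu'_{t,s})y'_{t,s}=C_{t,s}(\boldsymbol{f}')y'_{t,s}$, the last equality being the complementary slackness $\mu'_{t,s}y'_{t,s}=0$ from \eqref{eq:KKT}(c); summing over $t,s$ and collapsing paths to edges gives $\sum_{e}c_e(f'_e)f'_e$. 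In the second sum I substitute $\nu'_t y_{t,s}=C_{t,s}(\boldsymbol{f}')y_{t,s}-\mu'_{t,s}y_{t,s}$ (here the cross term does \emph{not} vanish, since $y_{t,s}$ comes from the unprimed equilibrium), which after collapsing to edges gives $\sum_{e}c_e(f'_e)f_e-\sum_{t,s}\mu'_{t,s}y_{t,s}$. Hence $\sum_{t}\nu'_t(d'_t-d_t)=\sum_{e}c_e(f'_e)(f'_e-f_e)+\sum_{t,s}\mu'_{t,s}y_{t,s}$. Running the same computation with the primed and unprimed equilibria interchanged yields $-\sum_{t}\nu_t(d'_t-d_t)=-\sum_{e}c_e(f_e)(f'_e-f_e)+\sum_{t,s}\mu_{t,s}y'_{t,s}$.

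Adding the two identities gives
\begin{equation*}
(\boldsymbol{\nu}'-\boldsymbol{\nu})^{T}(\boldsymbol{d}'-\boldsymbol{d})=\sum_{e\in\mathcal{E}}\bigl(c_e(f'_e)-c_e(f_e)\bigr)\bigl(f'_e-f_e\bigr)+\boldsymbol{\mu}'^{T}\boldsymbol{y}+\boldsymbol{\mu}^{T}\boldsymbol{y}'.
\end{equation*}
Since each $c_e$ is non-decreasing, every summand $\bigl(c_e(f'_e)-c_e(f_e)\bigr)\bigl(f'_e-f_e\bigr)$ is nonnegative, so the edge sum is $\ge 0$; this gives the first inequality $(\boldsymbol{\nu}'-\boldsymbol{\nu})^{T}(\boldsymbol{d}'-\boldsymbol{d})\ge\boldsymbol{\mu}'^{T}\boldsymbol{y}+\boldsymbol{\mu}^{T}\boldsymbol{y}'$. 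The second inequality $\boldsymbol{\mu}'^{T}\boldsymbol{y}+\boldsymbol{\mu}^{T}\boldsymbol{y}'\ge 0$ is immediate because $\boldsymbol{\mu},\boldsymbol{\mu}',\boldsymbol{y},\boldsymbol{y}'$ are all entrywise nonnegative.

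The routine part is the bookkeeping that exchanges $\sum_{t,s}$ with $\sum_{e}$ via the edge-flow constraint. The one place where genuine care is needed is tracking which product $\mu_{t,s}y_{t,s}$ is killed by complementary slackness and which is not: the "matched" products vanish, while the "mismatched" products $\mu'_{t,s}y_{t,s}$ and $\mu_{t,s}y'_{t,s}$ survive and are precisely the right-hand side of the claim. The conceptual crux is the final step: it is the non-decreasing assumption on the $c_e$ — equivalently, monotonicity of the operator underlying the variational-inequality form of the WE — that upgrades the algebraic identity into the stated inequality.
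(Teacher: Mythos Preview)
Your argument is correct. The paper itself does not supply a proof of this proposition; it simply attributes the result to the reference on parametric optimization cited just before the statement. Your derivation is precisely the standard one that reference develops: use the KKT identity $\nu_t=C_{t,s}(\boldsymbol{f})-\mu_{t,s}$ from \eqref{eq:KKT}, multiply against the path-flow decompositions of $\boldsymbol{d}$ and $\boldsymbol{d}'$, let complementary slackness kill the matched $\mu_{t,s}y_{t,s}$ terms while the mismatched ones survive, collapse path sums to edge sums via \eqref{eq:F_d}, and then invoke monotonicity of each $c_e$ to bound the edge term below by zero. There is nothing to add or correct; in particular, your careful distinction between the vanishing ``matched'' products and the surviving ``mismatched'' products $\mu'_{t,s}y_{t,s}$ and $\mu_{t,s}y'_{t,s}$ is exactly the point.
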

The result of the above proposition can also be written as
$$\left[\sum_{e \in s}c_e(f_e')-c_e(f_e)\right](d_t'-d_t)\geq 0,$$ for $s \in \mathcal{S}_t$ with $y_{t, s}', y_{t, s}>0$.
The Proposition \ref{prop:diff_d} states that if one demand $d_t$ is increased by Sybil users, with other demands remaining the same, then the equilibrium cost $\nu_t$ calculated by the RS for the user $u$ with OD pair $\theta_u=t$ is also increased. 
\begin{proposition}
    For $W(\boldsymbol{d})$ with demand $\boldsymbol{d}$, let $\boldsymbol{f}$ be a WE corresponds to cost $c_e(f_e)$ and $\boldsymbol{f}'$ be a WE corresponds to cost $c_e'(f_e)$, then $\left[c_e'(f_e)-c_e(f_e)\right](f_e'-f_e) \leq 0$ and $\left[c_e'(f_e')-c_e(f_e')\right](f_e'-f_e) \leq 0$.
\end{proposition}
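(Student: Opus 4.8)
The plan is to characterize each Wardrop equilibrium by a variational inequality, pair the two inequalities against each other, and use monotonicity of the cost functions to absorb the mismatched cross-terms. First I would record the variational-inequality (VI) form of WE: since each $c_e$ is continuous and non-decreasing, the Beckmann objective in $W(\boldsymbol d)$ is convex and differentiable in the edge loads with gradient $(c_e(f_e))_{e\in\mathcal E}$, and the set of feasible edge loads $\{\boldsymbol g:(\boldsymbol y,\boldsymbol g)\in F(\boldsymbol d)\text{ for some }\boldsymbol y\}$ is a convex polytope, so $\boldsymbol f$ minimizing $W(\boldsymbol d)$ under cost $c$ is equivalent to $\sum_{e\in\mathcal E}c_e(f_e)(g_e-f_e)\ge 0$ for every feasible $\boldsymbol g$. (The same inequality can be extracted from the KKT system \eqref{eq:KKT} by summing $\lambda_e(g_e-f_e)$ over edges and regrouping by path: the $\nu_t$-contributions cancel because both flows carry the same demand, and the $\mu_{t,s}$-contributions are nonnegative by complementary slackness.) Applying this both to $\boldsymbol f$ under $c$ and to $\boldsymbol f'$ under $c'$ gives $\sum_e c_e(f_e)(g_e-f_e)\ge 0$ and $\sum_e c'_e(f'_e)(g_e-f'_e)\ge 0$ for all feasible $\boldsymbol g$.

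Next I would evaluate the first VI at $\boldsymbol g=\boldsymbol f'$ and the second at $\boldsymbol g=\boldsymbol f$ — both feasible for the common demand $\boldsymbol d$ — obtaining $\sum_e c_e(f_e)(f'_e-f_e)\ge 0$ and $\sum_e c'_e(f'_e)(f_e-f'_e)\ge 0$, whose sum is $\sum_e[c_e(f_e)-c'_e(f'_e)](f'_e-f_e)\ge 0$. For the first claim, I would write $\sum_e[c_e(f_e)-c'_e(f_e)](f'_e-f_e)$ as that sum plus $\sum_e[c'_e(f'_e)-c'_e(f_e)](f'_e-f_e)$ and note that each summand of the second piece is nonnegative because $c'_e$ is non-decreasing, so the total is $\ge 0$, i.e. $\sum_e[c'_e(f_e)-c_e(f_e)](f'_e-f_e)\le 0$. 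The second claim is symmetric: write $\sum_e[c_e(f'_e)-c'_e(f'_e)](f'_e-f_e)$ as $\sum_e[c_e(f_e)-c'_e(f'_e)](f'_e-f_e)+\sum_e[c_e(f'_e)-c_e(f_e)](f'_e-f_e)$ and use monotonicity of $c_e$.

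The step to watch is the passage from these \emph{summed} inequalities to the edgewise statement as written: the VI argument controls only the sums over $\mathcal E$, and an individual-edge inequality can genuinely fail once the perturbation $c'_e-c_e$ is spread with mixed signs over several edges, since a Braess-type re-routing may push one edge's load against the direction of its own cost change. I would therefore either state the conclusion in the summed form, or specialize to the targeted-attack setting where the perturbation is supported on the single targeted edge $e^{\prime}$ (equivalently, $c'_e-c_e$ keeps a fixed sign on each edge); there the Beckmann-potential comparison $\int_{f_{e^{\prime}}}^{f'_{e^{\prime}}}\bigl(c'_{e^{\prime}}(z)-c_{e^{\prime}}(z)\bigr)\,dz\le 0$ — immediate from optimality of $\boldsymbol f$ for $W$ and of $\boldsymbol f'$ for the perturbed Beckmann potential over the same polytope — localizes to exactly the two claimed edgewise inequalities. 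Throughout I would keep the hypotheses that every $c_e$ and $c'_e$ is continuous and non-decreasing explicit, since these are what make the Beckmann program convex and justify discarding the cross-terms.
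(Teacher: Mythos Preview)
The paper does not actually include a proof of this proposition; it is stated bare, with the surrounding discussion attributing the sensitivity results to the parametric-optimization reference cited just before Proposition~\ref{prop:diff_d}. So there is no in-paper argument to compare against, and I evaluate your proposal on its own merits.

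Your approach is sound. The variational-inequality characterization of the Beckmann minimizer, the pairing of the two VIs at each other's equilibrium load, and the absorption of cross-terms via monotonicity of $c_e$ and $c'_e$ are all correct and yield the \emph{summed} inequalities
\[
\sum_{e\in\mathcal E}\bigl[c'_e(f_e)-c_e(f_e)\bigr](f'_e-f_e)\le 0,
\qquad
\sum_{e\in\mathcal E}\bigl[c'_e(f'_e)-c_e(f'_e)\bigr](f'_e-f_e)\le 0.
\]
You are also right to flag that the \emph{edgewise} statement, as literally written, does not follow in general: with perturbations $c'_e-c_e$ of mixed sign across several edges, Braess-type rerouting can move an individual edge's load against the direction of its own cost change.

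The resolution you propose matches the paper's intended use. The sentence immediately following the proposition says that ``an increasing cost on an edge $e\in\mathcal E$ will cause the equilibrium load on that edge $f_e$ to decrease,'' i.e., the perturbation is supported on a single edge (the targeted edge $e'$). In that regime your summed inequality has only one nonzero summand and collapses to exactly the claimed edgewise bound; the Beckmann-potential comparison you mention gives the same conclusion. So your argument is correct for the setting the paper actually relies on, and your caveat about the general edgewise claim is a genuine refinement that the paper leaves implicit.
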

Then, we show that an increasing cost on an edge $e \in \mathcal{E}$ will cause the equilibrium load on that edge $f_e$ to decrease. The reduced load can be interpreted as a redistribution to alternative feasible paths. Hence, the attacker can fulfill the goal if the load is redistributed to the desired road $e^{\prime}$.

\section{Numerical Experiments}\label{sec:experiment}

We use the traffic network described in Fig. \ref{fig:example_2} as a case study of the RS and the misinformed demand attack. Both RS and the attacker's problem can be addressed using gradient descent with PM. In this case, we simply consider users with two OD pairs: one (blue) aims to go from node $1$ to $5$, and the other (green) wants to go from node $3$ to $5$. The set of feasible paths for OD 1-5 is $\mathcal{S}_1=\{1, 2, 3\}$ with choice $1$ being path 1-3-4-5, choice $2$ being path 1-3-5, and choice $3$ being path 1-2-5. The set of feasible paths for OD 3-5 is $\mathcal{S}_1=\{1, 2\}$ with choice $1$ being path 3-4-5, and choice $2$ being path 3-5. The number displayed on each edge represents the free-flow time cost $t_e$.

\subsection{Recommendation System}
For simplicity, the cost function $c_e(\cdot)$ is selected as the travel time cost $c_e(f_e)=t_e(1+0.4\left(\frac{f_e}{k_e} \right)^2)$, where $t_e$ is the free-flow travel time on edge $e$ indicated red in Fig. \ref{fig:example_2}, and the capacity $k_e, \forall e \in \mathcal{E}$ is chosen as $10$. In the case of independent routing, users with OD pair 1-5 will either choose choice 1 or 3, and users with OD pair 3-5 will choose choice 1. In this case, the worst case (all users decide on choice 1) total travel time cost will be $236$ seconds ($10.4$ sec for path 3-4-5 and $13.2$ sec for path 1-3-4-5). However, if users follow the recommendation in Fig. \ref{fig:woa_result}, the total travel time cost will be $126$ seconds with both choices for OD pair 3-5 being $5.3$ sec and both choice 1 and 3 for OD pair 1-5 being $7.3$ sec. As a result, the proposed RS can help achieve a better outcome (lower total cost) while not sacrificing users' performance (in terms of expected travel time cost). 

\begin{figure}[h]
    \centering
    \includegraphics[width=3.35in]{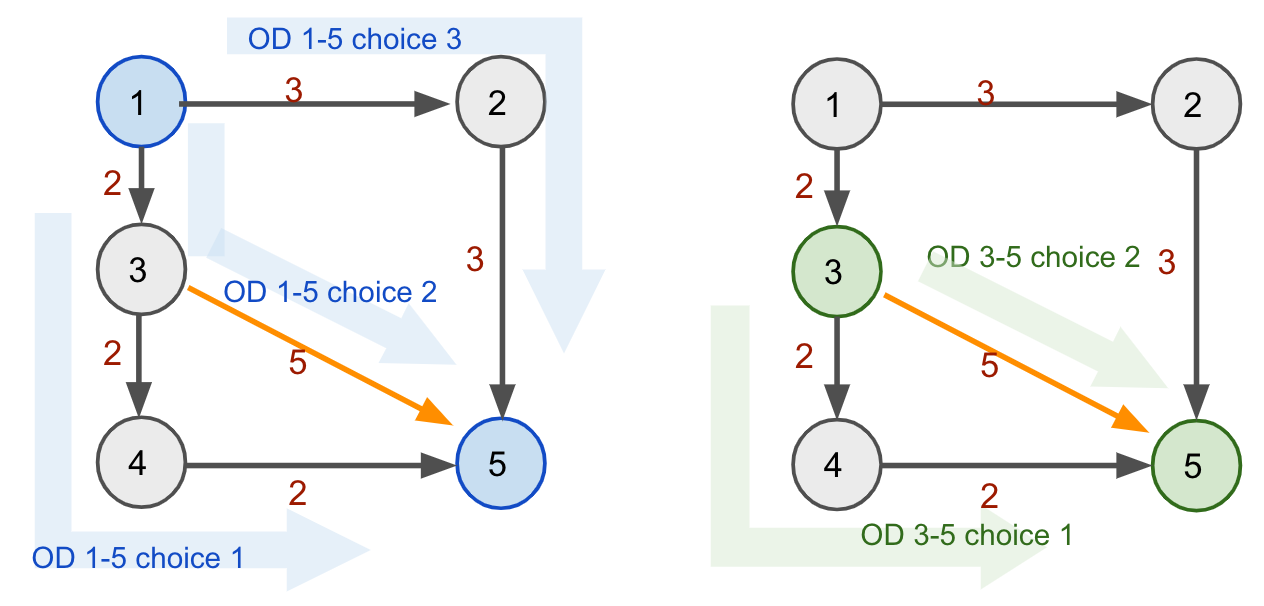}
    \caption{The setting where two OD pairs (blue and green) are presented with multiple viable paths. The attacker's goal is to manipulate the recommendations to the users, coaxing them to go through paths containing the orange edge. That is, the attacker desires choice 2 of both OD pairs. The free-flow road costs are indicated in red on the edges.}
    \label{fig:example_2}
    \vspace{-6mm}
\end{figure}

\subsection{Under Attacks}

We consider the case where the attacker can fabricate user demands for a set $\mathcal{K} \subset \mathcal{V} \times \mathcal{V}$ of distinct OD pairs to mislead the RS who selects WE as a feasible recommendation. We have $10$ authentic users aiming to go from node $1$ to $5$ (10 demands for OD pair 1-5), and $10$ users want to go from node $3$ to $5$ (10 demands for OD pair 3-5), the attacker can identify the desired fake demand levels by solving the problem in section \ref{sec:md_problem}. As shown in Fig. \ref{fig:wa_result}, we can observe that both OD pairs 1-5 and 3-5 exhibit a higher probability of passing the target edge $(3, 5)$. The expected flow caused by authentic users illustrated in Fig. \ref{fig:edge_flow} also shows that $f^r_{(3, 5)}$ meets the desired level $\gamma \geq 10$ by generating a total of $30$ non-existent demands on the traffic network.

For benchmarking purposes, we select two relatively straightforward attacker profiles: uniform and random. These attackers are restricted to allocating the same amount of demands as the optimal attacker, totaling $30$ demands. The uniform attacker evenly distributes the total demand across all OD pairs, while the random attacker distributes the demand randomly among the OD pairs. It should be noted that, in the case of the random attacker, we are interested in presenting its average performance of $200$ experiments.
\vspace{-3mm}
\begin{figure}[H]
    \centering
    \includegraphics[width=2.8in]{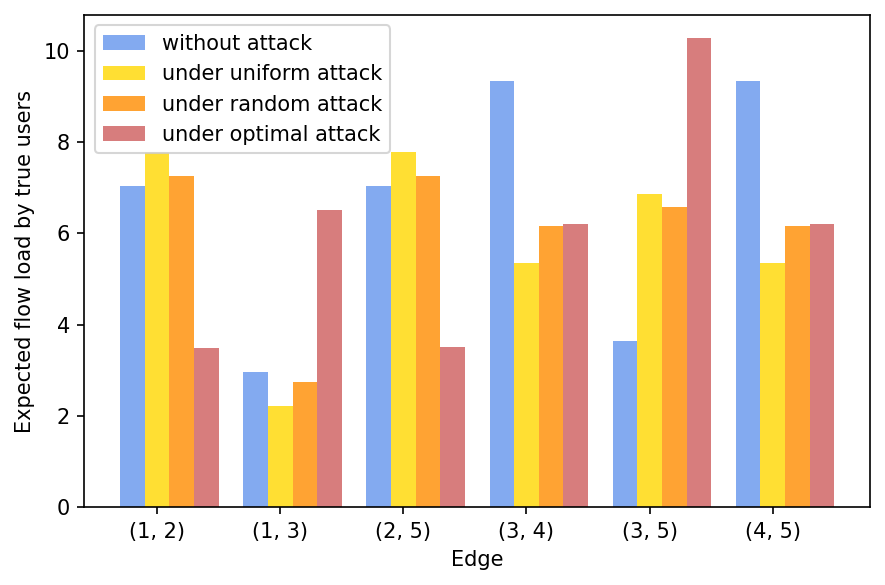}
    \caption{The expected edge flow caused by authentic users with and without attack. In this case, the targeted edge is $(3, 5)$ and the desired flow level is $\gamma=10$.}
    \label{fig:edge_flow}
    \vspace{-4mm}
\end{figure}

As shown in Fig. \ref{fig:edge_flow}, we compare the expected flow load on each edge generated by authentic users. In this comparison, all attacker profiles contribute to an increase in the expected flow on our targeted edge $(3,5)$. However, it is essential to note that the optimal attacker yields the most significant increase in flow on this edge and is the only one that reaches the desired level $\gamma=10$.

\subsection{From Braess' Paradox to Resilience Paradox}

To this end, a natural question is: \textit{Can the local targeted attack lead to a better overall outcome (total costs for drivers and users) in some situations?} We can start with a carefully crafted example leveraging the classical Braess’ network \cite{Braess}. 

\subsubsection{Misinformed Demands} \label{sec:paradox_mis_demands}
Within the transportation network shown in Fig. \ref{fig:paradox}, there are $30$ users aiming to go from node A to node B, and the $\epsilon$ is small enough so that the cost on C-D is close to $0$ even though all $30$ users are passing through. Before the attack (illustrated in Fig. \ref{fig:paradox1}), the RS will recommend a mixed strategy $(1/3, 1/3, 1/3)$ on path A-C-B, A-C-D-B, and A-D-B, respectively. The overall costs on these three paths are all $4$, which leads to a total travel time cost of $120$ for users. Suppose the attacker wants more ``users'' to pass D-B by fabricating a large demand on C-D to make C-D seem congested to the RS, as in Fig. \ref{fig:paradox2}. The RS will recommend a strategy $(0.5, 0, 0.5)$ on paths A-C-B, A-C-D-B, and A-D-B, respectively. The overall costs on A-C-B and A-D-B are both $3.5$, which leads to the total travel time cost for users becoming $105$. The cost under attack is better than the performance without attack.
\vspace{-3mm}
\begin{figure}[!h]
    \centering
    \begin{subfigure}[t]{0.21\textwidth}
        \centering
        \includegraphics[width=\textwidth]{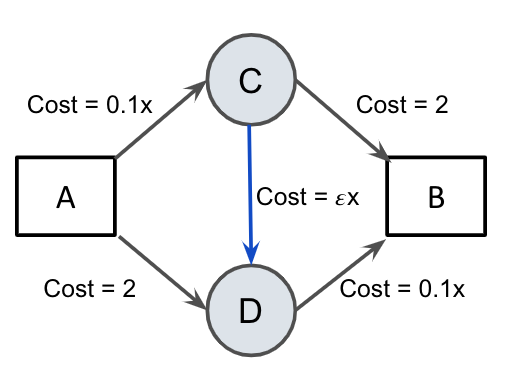}
        \caption{Without attack}
        \label{fig:paradox1}
    \end{subfigure}
    \hfill
    \begin{subfigure}[t]{0.21\textwidth}
        \centering
        \includegraphics[width=\textwidth]{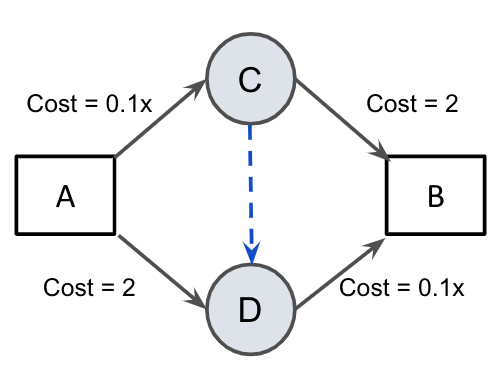}
        \caption{Under attack}
        \label{fig:paradox2}
    \end{subfigure}
    \caption{A carefully crafted example for the discussion on the ``Resilience Paradox''.}
\label{fig:paradox}
\vspace{-3mm}
\end{figure}

\subsubsection{Misinformed Costs}
Consider an alternative attack scenario where the attacker can introduce misinformed traffic conditions by manipulating the cost functions associated with edges, rather than fabricating non-existent demands. For illustration purposes, using the same example as depicted in Fig. \ref{fig:paradox}, where there are $30$ users aiming to travel from A to B, the recommendation without attack is $(1/3, 1/3, 1/3)$, resulting in a total cost of $120$. In the case of a misinformed cost attack, the attacker can mislead the RS into a similar recommendation, specifically $(0.5, 0, 0.5)$, along with a reduced total cost of $105$, as discussed in Section \ref{sec:paradox_mis_demands}, by misinforming the coefficient $\epsilon$ associated with the cost of C-D to a sufficiently high value.

\subsubsection{Misinformation in Changing Other Drivers' Behavior}
Suppose that there are $20$ drivers and $10$ users aim to go from node A to node B, and the transportation network is shown in Fig. \ref{fig:paradox}. Before the attack, the path A-C-D-B seems better for other less strategic drivers. Considering other drivers' behavior, the RS will recommend a mixed strategy $(0.5, 0, 0.5)$ on paths A-C-B, A-C-D-B, and A-D-B, respectively. The overall costs on these three paths are $4.5, 5$, and $4.5$, which leads to the total travel time cost for drivers and users becoming $145$. If the attacker's objective is to divert more ``users'' to route D-B by simulating a car accident on road C-D to misguide drivers, it results in drivers' behavioral response being $(0.5, 0, 0.5)$. In this case, the RS will recommend a mixed strategy $(0, 1, 0)$ on paths A-C-B, A-C-D-B, and A-D-B, respectively. The overall costs on these three paths are all $4$, resulting in an overall travel time cost of 120 for both drivers and users. This outcome represents an improvement over the performance in the absence of the attack. We call this phenomenon ``Resilience Paradox'',  where attacks help improve the performance of the recommendations, and strengthen the resilience of the network. 

We summarize the scenarios in the carefully constructed example involving a total demand of $30$ drivers and users discussed above as Table \ref{tab:TTC}. The relation of total travel time costs between these cases not only illustrates the ``Resilience Paradox'' but also emphasizes the significance of the RS. The latter is evident that a lower total cost is attained when the demand composition shifts from $20$ drivers and $10$ users to $30$ users in scenarios without attacks.

\begin{table}[ht]
    \centering
    \begin{tabular}{|c|c|c|c|}
         \hline
         \textbf{Drivers} & \textbf{Users} & \textbf{Attack} & \textbf{Travel Time Costs} \\
         \hline
         20 & 10 & No & 145 \\
         20 & 10 & Yes & 120 \\
         0 & 30 & No & 120 \\
         0 & 30 & Yes & 105 \\
         \hline
    \end{tabular}
    \caption{Travel Time Costs in Different Scenarios}
    \label{tab:TTC}
\vspace{-5mm}
\end{table}

\section{Conclusions}
This paper aims to explore potential informational attacks on navigational RSs. We first propose an RS that considers human factors such as users' compliance and other non-user drivers' behaviors when providing recommendations. Then, we identify various avenues attackers can exploit to benefit certain groups or elevate traffic congestion levels and focus on the misinformed demand attack in a local targeted sense. The attacker’s problem is analyzed based on the Stackelberg game framework rooted in Wardrop equilibrium with additional non-existent demands. 

Our study shows that users are vulnerable to attacks targeting specific roads by creating non-existent demands with OD pairs associated with alternative path options. Through the exploration, we highlight the need to consider informational attacks for a more resilient and effective navigational recommendation in the future. In addition, one of the other possible future directions can be investigating different scenarios, such as the impact of misinformed traffic conditions (costs) attacks on the RS. In contrast to the network-wide attacks that worsen traffic congestion, we also illustrate the ``Resilience Paradox'', in which the locally targeted attack by misinformed demands or misinformed traffic conditions can benefit the overall traffic outcome in terms of total travel time costs. This also points out that the locally targeted attack is a potential aspect worth further investigation.

\bibliographystyle{IEEEtran}
\bibliography{IEEEabrv,reference}
\vspace{12pt}

\end{document}